\newcolumntype{Y}{>{\centering\arraybackslash}X}
\setlist[enumerate]{after=\vspace{-0.5\baselineskip}}
\newcommand{\mycomment}[1]{}
\theoremstyle{plain}
\newtheorem{thm}{Theorem}
\newtheorem{lemma}[thm]{Lemma}
\theoremstyle{definition}
\newtheorem{defn}[thm]{Definition}
\theoremstyle{remark}
\numberwithin{equation}{section}
\numberwithin{thm}{section}
\title{Improving Efficiency in Federated Learning with Optimized Homomorphic Encryption}
\author[Feiran Yang]{Feiran Yang}
\begin{document}

\pagestyle{plain}
\setcounter{page}{1}

\maketitle

\begin{abstract}

\indent Federated learning is a method used in machine learning to allow multiple devices to work together on a model without sharing their private data. Each participant keeps their private data on their system and trains a local model and only sends updates to a central server, which combines these updates to improve the overall model. A key enabler of privacy in FL is homomorphic encryption (HE). HE allows computations to be performed directly on encrypted data. While HE offers strong privacy guarantees, it is computationally intensive, leading to significant latency and scalability issues—particularly for large-scale models like BERT. In my research, I aimed to address this inefficiency problem. My research introduces a novel algorithm to address these inefficiencies while maintaining robust privacy guarantees. I integrated several mathematical techniques such as selective parameter encryption, sensitivity maps, and differential privacy noise within my algorithms, which has already improved its efficiency. I have also conducted rigorous mathematical proofs to validate the correctness and robustness of the approach. I implemented this algorithm by coding it in C++, simulating the environment of federated learning on large-scale models, and verified that the efficiency of my algorithm is $3$ times the efficiency of the state-of-the-art method. This research has significant implications for machine learning because its ability to improve efficiency while balancing privacy makes it a practical solution! It would enable federated learning to be used very efficiently and deployed in various resource-constrained environments, as this research provides a novel solution to one of the key challenges in federated learning: the inefficiency of homomorphic encryption, as my new algorithm is able to enhance the scalability and resource efficiency of FL while maintaining robust privacy guarantees. \\

\noindent \textbf{Keywords}: Ethereum. Blockchain. Zero-Knowledge Proofs. Privacy-Preserving. Transparent zk-SNARKs.
\end{abstract}

\newpage
\tableofcontents
\newpage

\section{Introduction}\label{sec:introduction}

Federated learning (FL) became increasingly popular in distributed systems and privacy-preserving machine learning, due to its ability to enable collaborative model training across decentralized datasets without directly sharing sensitive information. In FL, clients train models locally and share encrypted updates with a central server for aggregation, thus preserving data privacy. Unlike traditional centralized machine learning, which requires collecting data in one location, FL keeps data decentralized and ensures that they stay in original locations. 

However, privacy vulnerabilities persist, as malicious servers may exploit aggregated updates to reconstruct sensitive data or infer private information, since the shared model is updated accordingly with the private data. To address these threats, homomorphic encryption (HE) has been established, allowing computations on encrypted data without decryption. Despite its capability to ensure data privacy, HE faces significant challenges due to its high computational and communication overheads, which limit its scalability and feasibility in federated learning systems. In particular, HE-based FL faces challenges like high latency, increased energy consumption, and resource demands, especially in large-scale models or with limited computational capacity. Traditional HE solutions fail to adequately optimize for the decentralization of FL, making the process inefficient for large foundation models like ResNet and BERT. Existing methods such as selective parameter encryption and adaptive HE strategies still struggle with balancing privacy preservation and computational efficiency. 

This paper introduces a new algorithm to improve the efficiency of homomorphic encryption in federated learning systems. 

\section{Related Works}

Federated Learning (FL) has revolutionized the way organizations collaborate on model training without exposing private data. A seminal contribution was the Federated Averaging (FedAvg) algorithm by McMahan et al. \cite{mcmahan2017communication}, which demonstrated how deep networks could be trained efficiently on decentralized data across mobile devices. As the technology matured, researchers became increasingly concerned with privacy and security vulnerabilities, sparking the development of a rich body of defense strategies. 

One prominent privacy concern is the leakage of training data through shared gradients. Wei et al. \cite{wei2020gradient} underscored this vulnerability by illustrating how adversaries could exploit gradients to reconstruct sensitive information. To combat such attacks, defenses like Soteria \cite{sun2021soteria} introduced random perturbations to data representations. Meanwhile, Fed-CDP \cite{wei2021gradient} harnessed client-level differential privacy to further protect gradients without significantly compromising model performance.

Numerous privacy-preserving approaches have combined cryptographic and statistical methods to secure FL. Truex et al. \cite{truex2019hybrid} introduced a hybrid framework that blends Secure Multiparty Computation (SMC) and Differential Privacy (DP) for balanced security and scalability. Xu and Ma \cite{xu2019hybridalpha} explored functional encryption in their HybridAlpha solution, aiming to secure FL workflows end to end. Concurrently, Liu et al. \cite{liu2022privacy} emphasized the importance of Privacy-Preserving Aggregation (PPAgg) protocols to safeguard model updates.

Several comprehensive surveys categorize and analyze these techniques. Yin et al. \cite{yin2021comprehensive} offer a broad taxonomy of privacy-preserving methods, while Lyu et al. \cite{lyu2020threats} delve deeper into attacks such as model poisoning and inference attacks. Jiang et al. \cite{jiang2022privacy} extended these discussions into the realm of Vertical Federated Learning (VFL), identifying unique risks at the prediction stage. Additionally, Zhang et al. \cite{zhang2022security} explored integrating blockchain and Trusted Execution Environments (TEEs) to bolster FL security.

In industrial and healthcare environments, FL faces both resource and data sensitivity constraints. Luo et al. \cite{luo2019efficient} developed frameworks tailored to industrial AI systems, addressing communication bottlenecks and privacy requirements. Wei et al. \cite{wei2020differential} showed how differential privacy mechanisms could be optimized in constrained environments. More recently, Hu et al. \cite{hu2024overview} surveyed the latest FL security developments, including methods to deal with data heterogeneity, adversarial robustness, and communication overhead.

Emerging defense strategies focus on mitigating gradient leakage while maintaining model performance. NbAFL \cite{wei2020mechanisms} introduced adaptive noise injection, striking a balance between accuracy and privacy. Similarly, Sun et al. \cite{sun2021defense} proposed more targeted perturbations to data representations, reinforcing resistance to gradient-based attacks. Zhang et al. \cite{zhang2022threats} advanced the field further by investigating homomorphic encryption (HE) and blockchain-based solutions for secure aggregation.

Despite considerable progress, homomorphic encryption remains a challenging bottleneck in large-scale FL deployments. Fully homomorphic schemes often require expensive polynomial operations, frequent relinearization or bootstrapping, and considerable memory for storing encrypted parameters. Although selective encryption methods can reduce overhead, they risk compromising privacy by leaving portions of the data unprotected \cite{zhang2022threats}. Conversely, fully encrypted solutions sometimes become infeasible in real-time or large-scale scenarios due to the sheer computational load.

Addressing these constraints calls for more efficient HE protocols and hybrid solutions that combine cryptography, differential privacy, and robust architectural designs. As FL increasingly supports critical domains such as healthcare \cite{jiang2022privacy} and industrial automation \cite{luo2019efficient}, research must continue refining homomorphic encryption techniques to ensure both strong privacy guarantees and practical runtime performance.

\section{Preliminaries}\label{2}

\subsection{Federated Learning} \label{fl}

\begin{defn}(\textbf{Federated Learning})
Federated learning is a privacy-preserving framework where multiple clients each hold their own private dataset but collectively wish to train a shared model. The FL process proceeds in rounds:
\begin{enumerate}
    \item \textit{Global Model Initialization:} The server initializes a global model \(M^{(0)}\).
    \item \textit{Local Training:} Each participating client \(C_i\) downloads the global model \(M^{(t)}\) (from the previous round \(t\)) and trains it locally with its private data for a fixed number of epochs or until a convergence criterion is met. 
    \item \textit{Upload Encrypted Updates:} Each client encrypts its local model update \(\Delta_i^{(t)}\) using a homomorphic encryption scheme (Definition~\ref{he_def}) and sends the encrypted update to the server.
    \item \textit{Aggregation:} The server, without decrypting the data, homomorphically aggregates the local updates \(\Delta_i^{(t)}\). 
    \item \textit{Model Update:} The server updates the global model parameters \(M^{(t+1)} = M^{(t)} + \text{HE.Aggregate}\{\Delta_i^{(t)}\}\).
    \item \textit{Repeat or Terminate:} The procedure continues for another round until the global model converges or a fixed number of iterations is completed.
\end{enumerate}
\end{defn}

\begin{defn}(\textbf{Local Objective Function in FL})
In federated learning, each client \(C_i\) holds a local dataset \(\mathcal{D}_i\) and aims to minimize an objective function
\[
F_i(\mathbf{w}) \;=\; \frac{1}{|\mathcal{D}_i|} \sum_{(x,y)\in\mathcal{D}_i} \ell(\mathbf{w}; x, y),
\]
where \(\ell\) is a loss function (e.g., cross-entropy for classification). The global objective is often expressed as a weighted sum of local objectives:
\[
F(\mathbf{w}) \;=\; \sum_{i=1}^{N} \pi_i \, F_i(\mathbf{w}),
\]
where \(\pi_i = \frac{|\mathcal{D}_i|}{\sum_{j=1}^{N} |\mathcal{D}_j|}\) or a similar weighting scheme.
\end{defn}

\begin{defn}(\textbf{Non-IID Data})
A common challenge in FL is that clients may have non-identically and independently distributed (\textit{non-IID}) data. Formally, each \(\mathcal{D}_i\) is drawn from a (potentially) different underlying distribution \( \mathcal{P}_i \). The aggregation step must account for these heterogeneous distributions to avoid bias in the global model.
\end{defn}

\begin{defn}(\textbf{Threat Model in FL})
We consider a semi-honest (also called \emph{honest-but-curious}) server or adversary who follows the protocol correctly but attempts to infer additional information about the clients’ data from the intercepted messages. Adversaries may also compromise a subset of clients, gaining access to their local updates or keys. This motivates the use of cryptographic techniques (e.g., homomorphic encryption) and privacy mechanisms (e.g., differential privacy).
\end{defn}

Federated learning mitigates data privacy risks by preventing raw data from leaving local devices. Nevertheless, partial leakage may still occur through shared gradient updates, necessitating additional cryptographic techniques to ensure privacy.

\subsection{Homomorphic Encryption Scheme} \label{he}

\begin{defn}(\textbf{Partially vs. Fully Homomorphic Encryption})
A homomorphic encryption scheme is called:
\begin{itemize}
    \item \textit{Partially Homomorphic (PHE)} if it supports homomorphic evaluation of either addition \emph{or} multiplication (but not both arbitrarily).
    \item \textit{Somewhat/Fully Homomorphic (SHE/FHE)} if it supports an unbounded number of both additions and multiplications on ciphertexts (fully) or supports them up to a certain circuit depth (somewhat).
\end{itemize}
In federated learning, many practical protocols rely on partially or somewhat homomorphic schemes for efficient encrypted aggregation (e.g., additive homomorphisms to sum encrypted gradients).
\end{defn}

\begin{defn}(\textbf{Homomorphic Encryption})
It is a cryptographic technique that enables computation on ciphertexts as if it were plain data. If \(Enc(\cdot)\) is our encryption function and \(\oplus\) is a homomorphic operation (such as addition), we want the property that:
\[
Enc(a) \oplus Enc(b) = Enc(a + b)
\]
for (fully or partially) homomorphic schemes. In FL, this property allows the central server to sum or average the encrypted model updates from the clients without decrypting. 
\end{defn}

\begin{defn} \label{he_def}
(\textbf{Homomorphic Encryption Scheme}). Let $\lambda$ be a security parameter. A homomorphic encryption scheme $\Pi$ consists of:
\begin{itemize}
    \item $\textbf{KeyGen}(\lambda) \rightarrow (pk, sk, ek)$: Generates a public key $pk$, secret key $sk$, and evaluation key $ek$.
    \item $\textbf{Enc}(pk,m) \rightarrow ct$: Encrypts a plaintext message $m$ using the public key $pk$ and outputs ciphertext $ct$.
    \item $\textbf{Dec}(sk,ct) \rightarrow m$: Decrypts a ciphertext $ct$ using the secret key $sk$ to recover the plaintext $m$.
    \item $\textbf{Eval}(ek,\circ,ct_1,\dots,ct_n) \rightarrow ct_{\text{eval}}$: Given an evaluation key $ek$, a circuit (or arithmetic operation) $\circ$, and ciphertexts $ct_1,\dots,ct_n$, outputs a ciphertext $ct_{\text{eval}}$ such that 
    \[
      \textbf{Dec}(sk, ct_{\text{eval}}) = \circ(\textbf{Dec}(sk,ct_1), \dots, \textbf{Dec}(sk,ct_n)).
    \]
\end{itemize}
\end{defn}

\begin{defn}(\textbf{Noise Budget in HE})
Most homomorphic encryption schemes rely on a noise term introduced during \textbf{Enc} to ensure security. Each homomorphic operation can grow this noise. When the noise exceeds a certain threshold, decryption fails or produces an incorrect result. The \emph{noise budget} refers to the capacity of a ciphertext to tolerate further homomorphic operations before exceeding this threshold.
\end{defn}

HE is well-suited for federated learning scenarios where clients encrypt their local updates before sending them to the server. However, the computational overhead grows significantly for high-dimensional models and large-scale neural networks.

\subsubsection{Sensitivity Map}

In many learning tasks, model parameters contribute differently to the overall performance or carry different levels of sensitive information. A \emph{sensitivity map} helps quantify this variation.

\begin{defn} \label{sens_map_def}
(\textbf{Sensitivity Map}). Let $\mathbf{w} \in \mathbb{R}^d$ be the parameter vector of a machine learning model. A sensitivity map is a function 
\[
   S: \mathbb{R}^d \rightarrow \mathbb{R}^d
\]
that assigns to each parameter $w_j$ a value $S(\mathbf{w})_j \in \mathbb{R}$, indicating how sensitive or privacy-critical $w_j$ is. A larger value of $S(\mathbf{w})_j$ suggests a higher sensitivity level for the parameter $w_j$.
\end{defn}

\begin{lemma} \label{monotonic_lemma}
(\textbf{Monotonic Mapping Property}). Suppose the sensitivity map $S(\mathbf{w})_j$ is monotonically related to a risk measure $\rho(\mathbf{w})_j$ that captures privacy or vulnerability (e.g., gradient magnitude, personal information density). Then for any scalar $c \geq 1$, we have:
\[
   S(\mathbf{w})_j \leq c \, \rho(\mathbf{w})_j \quad \forall j.
\]
\end{lemma}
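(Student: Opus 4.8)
The plan is to first make the phrase ``monotonically related'' precise, then reduce the family of inequalities indexed by $c$ to a single sharpest inequality, and finally close that inequality using a calibration property of the sensitivity map. I would begin by fixing a coordinate $j$ and writing the assumed monotone relationship as a link function: there is a non-decreasing function $g_j : \mathbb{R} \to \mathbb{R}$ with $S(\mathbf{w})_j = g_j\!\left(\rho(\mathbf{w})_j\right)$. Because the cited instances of the risk measure (gradient magnitude, information density) are non-negative, I would also record the standing hypothesis $\rho(\mathbf{w})_j \geq 0$, which is what lets the scalar $c$ interact correctly with the bound.

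Second, I would observe that the claim ``for any $c \geq 1$'' is governed entirely by its extreme case. If $\rho(\mathbf{w})_j \geq 0$ and $S(\mathbf{w})_j \leq \rho(\mathbf{w})_j$, then for every $c \geq 1$,
\[
S(\mathbf{w})_j \;\leq\; \rho(\mathbf{w})_j \;\leq\; c\,\rho(\mathbf{w})_j,
\]
so the whole $c$-indexed family follows from the single inequality $S(\mathbf{w})_j \leq \rho(\mathbf{w})_j$ (the $c=1$ endpoint). This reduction removes $c$ from the problem and isolates the actual content.

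Third, I would establish $S(\mathbf{w})_j \leq \rho(\mathbf{w})_j$ from a calibration assumption on the map $S$. The natural hypothesis, implicit in how a sensitivity map is built, is that the link function is a contraction bounded above by the identity on the relevant range, i.e.\ $g_j(t) \leq t$ for all $t \geq 0$; equivalently $g_j$ is $1$-Lipschitz with $g_j(0) \leq 0$, since then $g_j(t) = \bigl(g_j(t) - g_j(0)\bigr) + g_j(0) \leq (t - 0) + 0 = t$ for $t \geq 0$. Under either formulation, substituting $t = \rho(\mathbf{w})_j \geq 0$ yields $S(\mathbf{w})_j = g_j\!\left(\rho(\mathbf{w})_j\right) \leq \rho(\mathbf{w})_j$, and combining with the second step gives the lemma for all $c \geq 1$ and all $j$.

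The main obstacle is that bare monotonicity does \emph{not} imply the bound: a non-decreasing link such as $g_j(t) = t^2$ satisfies the monotone relationship yet violates $S(\mathbf{w})_j \leq c\,\rho(\mathbf{w})_j$ for every fixed $c$ once $\rho(\mathbf{w})_j > c$. Hence the crux of the argument is not a calculation but the identification and justification of the missing calibration hypothesis (contraction or normalization of the sensitivity scores, together with non-negativity of $\rho$). I would therefore state that normalization explicitly as part of the lemma's hypotheses and verify it against the construction of $S$ used elsewhere in the paper, so that the reductions in the second and third steps are actually licensed.
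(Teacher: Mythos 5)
Your proposal is sound, and it is considerably more careful than the paper's own treatment, which consists of a single sentence asserting that the bound ``follows from the definition of monotonicity'' --- that is, that any monotonic transformation of $\rho(\mathbf{w})_j$ is automatically bounded by a constant multiple of $\rho(\mathbf{w})_j$. That is precisely the inference your counterexample $g_j(t) = t^2$ refutes: monotonicity alone gives no linear control, and once $\rho(\mathbf{w})_j > c$ the claimed inequality fails. So the paper's proof is circular (it assumes the very bound it purports to derive), and the lemma as stated is not provable without further hypotheses. Your repair --- reducing the $c$-indexed family of inequalities to the $c=1$ endpoint using $\rho(\mathbf{w})_j \geq 0$, then closing that endpoint with an explicit calibration hypothesis $g_j(t) \leq t$ for $t \geq 0$ (equivalently, $g_j$ is $1$-Lipschitz with $g_j(0) \leq 0$) --- is the correct way to make the statement true, and your insistence on promoting the normalization to an explicit hypothesis is exactly what the paper omits. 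One practical note: the lemma is never invoked anywhere else in the paper (the thresholding and selective-encryption definitions use only $S$ itself), so restating it with your strengthened hypotheses would repair the result without disturbing anything downstream.
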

\begin{proof}
The proof follows from the definition of monotonicity: $S(\mathbf{w})_j$ is bounded by a constant factor times the risk measure if $S(\mathbf{w})_j$ is derived via a monotonic transformation of $\rho(\mathbf{w})_j$. 
\end{proof}

\begin{defn}(\textbf{Sensitivity Thresholding})
Given a sensitivity map \(S(\mathbf{w})\) and a user-defined threshold \(\tau\), define:
\[
   \mathcal{I}_{\text{enc}}(\tau) \;=\; \{\, j \,\mid\, S(\mathbf{w})_j > \tau \}, 
   \quad
   \mathcal{I}_{\text{plain}}(\tau) \;=\; \{\, j \,\mid\, S(\mathbf{w})_j \leq \tau \}.
\]
This partitioning plays a central role in \emph{selective encryption} of parameters.
\end{defn}

\subsubsection{Selective Parameter Encryption}

\begin{defn} \label{selective_enc_def}
(\textbf{Selective Parameter Encryption}). Let $\mathbf{w} \in \mathbb{R}^d$ and let $S(\mathbf{w})$ be a sensitivity map as in Definition~\ref{sens_map_def}. A \emph{selective parameter encryption} strategy $E$ partitions $\{1,\dots,d\}$ into two subsets: 
\[
  \mathcal{I}_{\text{enc}} = \{\,j \mid S(\mathbf{w})_j > \tau\}\quad\text{and}\quad
  \mathcal{I}_{\text{plain}} = \{\,j \mid S(\mathbf{w})_j \leq \tau\},
\]
for some threshold $\tau > 0$. Parameters indexed by $\mathcal{I}_{\text{enc}}$ are encrypted (e.g., via a homomorphic encryption scheme), while parameters indexed by $\mathcal{I}_{\text{plain}}$ are transmitted in plaintext or with a lighter security mechanism.
\end{defn}

\begin{lemma} \label{selective_enc_lemma}
(\textbf{Communication Reduction}). Assume that encrypting a parameter $w_j$ has cost $C_{\text{enc}} > 0$, whereas sending $w_j$ in plaintext has cost $C_{\text{plain}} \ll C_{\text{enc}}$. Under a selective parameter encryption strategy, the expected communication cost reduces to
\[
   |\mathcal{I}_{\text{enc}}|\cdot C_{\text{enc}} + |\mathcal{I}_{\text{plain}}|\cdot C_{\text{plain}},
\]
which is typically strictly less than encrypting all parameters (i.e., $d \cdot C_{\text{enc}}$) if $|\mathcal{I}_{\text{enc}}| < d$.
\end{lemma}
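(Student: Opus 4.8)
The plan is to evaluate the total communication cost directly by summing per-parameter costs over the partition, and then to compare the result against the fully encrypted baseline. The argument is essentially an accounting identity, so I would keep it short and let the two hypotheses ($C_{\text{enc}} > 0$ and $C_{\text{plain}} \ll C_{\text{enc}}$, in particular $C_{\text{plain}} < C_{\text{enc}}$) do the work.

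First I would record the one combinatorial fact the proof needs: by Definition~\ref{selective_enc_def} the sets $\mathcal{I}_{\text{enc}}$ and $\mathcal{I}_{\text{plain}}$ partition $\{1,\dots,d\}$, hence they are disjoint with $|\mathcal{I}_{\text{enc}}| + |\mathcal{I}_{\text{plain}}| = d$. Next I would invoke additivity of the transmission cost: each index $j$ contributes $C_{\text{enc}}$ if $j \in \mathcal{I}_{\text{enc}}$ and $C_{\text{plain}}$ if $j \in \mathcal{I}_{\text{plain}}$, and these contributions are incurred independently, so the total is $\sum_{j \in \mathcal{I}_{\text{enc}}} C_{\text{enc}} + \sum_{j \in \mathcal{I}_{\text{plain}}} C_{\text{plain}} = |\mathcal{I}_{\text{enc}}|\,C_{\text{enc}} + |\mathcal{I}_{\text{plain}}|\,C_{\text{plain}}$, which is exactly the claimed expression. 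For the strict inequality I would subtract this quantity from the baseline $d \cdot C_{\text{enc}}$ and substitute $d = |\mathcal{I}_{\text{enc}}| + |\mathcal{I}_{\text{plain}}|$; the difference collapses to $|\mathcal{I}_{\text{plain}}|\,(C_{\text{enc}} - C_{\text{plain}})$, which is strictly positive precisely when $|\mathcal{I}_{\text{plain}}| > 0$, i.e. when $|\mathcal{I}_{\text{enc}}| < d$, using $C_{\text{enc}} > C_{\text{plain}}$.

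I do not expect a genuine obstacle here, since the content is linear bookkeeping; the only point deserving care is the word \emph{expected} in the statement. If the partition is deterministic (fixed $\mathbf{w}$ and threshold $\tau$), the qualifier is vacuous and the displayed equality is exact. If instead $\mathcal{I}_{\text{enc}}$ is random --- for instance because the sensitivity map or $\tau$ is data-dependent --- then I would pass to expectations and replace the cardinalities by $\mathbb{E}[\,|\mathcal{I}_{\text{enc}}|\,]$ and $\mathbb{E}[\,|\mathcal{I}_{\text{plain}}|\,]$, the identity $\mathbb{E}[\,|\mathcal{I}_{\text{enc}}|\,] + \mathbb{E}[\,|\mathcal{I}_{\text{plain}}|\,] = d$ following from linearity of expectation; the strict inequality then holds whenever $|\mathcal{I}_{\text{enc}}| < d$ with positive probability. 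I would fix which reading is intended before committing to the final proof.
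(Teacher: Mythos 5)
Your proposal is correct and follows essentially the same route as the paper's own proof: partition $\{1,\dots,d\}$ via Definition~\ref{selective_enc_def}, sum the per-parameter costs over the two blocks, and compare against the baseline $d \cdot C_{\text{enc}}$. You simply carry out the algebra (the difference $|\mathcal{I}_{\text{plain}}|\,(C_{\text{enc}} - C_{\text{plain}})$) more explicitly than the paper does, and your remark on the deterministic versus random reading of \emph{expected} is a reasonable clarification the paper leaves implicit.
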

\begin{proof}
By partitioning $\{1,\dots,d\}$ based on $S(\mathbf{w})_j$ (Definition~\ref{selective_enc_def}), only a subset of parameters are fully encrypted. Summing costs over the partition yields the total communication cost. Comparisons with $d \cdot C_{\text{enc}}$ demonstrate reduction if \( |\mathcal{I}_{\text{enc}}| < d \).
\end{proof}

\subsubsection{Security Theorems}

\begin{thm}[Correctness of Homomorphic Encryption]
\label{thm:he_correctness}
Let \\ $\Pi = (\textbf{KeyGen}, \textbf{Enc}, \textbf{Dec}, \textbf{Eval})$ be a homomorphic encryption scheme with security parameter $\lambda$. Suppose $\circ$ is any arithmetic circuit (or function) over the message space. For all messages $m_1, m_2, \ldots, m_n$ in the valid plaintext space, for all keys $(pk, sk, ek) \leftarrow \textbf{KeyGen}(\lambda)$, and for ciphertexts $ct_i \leftarrow \textbf{Enc}(pk, m_i)$, the following holds with probability $1$ (or negligible error):
\[
   \textbf{Dec}\Bigl(sk,\; \textbf{Eval}\bigl(ek, \circ,\; ct_1,\ldots,ct_n\bigr)\Bigr)
   \;=\;\circ\bigl(m_1,\ldots,m_n\bigr).
\]
In other words, evaluating a circuit $\circ$ on the ciphertexts $ct_i$ and then decrypting yields the same result as applying $\circ$ on the plaintexts $m_i$ directly.
\end{thm}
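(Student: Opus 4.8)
The plan is to proceed by structural induction on the circuit $\circ$, bootstrapping from the single-operation guarantee that is already built into the definition of \textbf{Eval} in Definition~\ref{he_def}. First I would record the two atomic facts on which everything rests: the decryption correctness of fresh ciphertexts, namely $\textbf{Dec}(sk, \textbf{Enc}(pk, m)) = m$ for every valid plaintext $m$ and every key triple $(pk, sk, ek) \leftarrow \textbf{KeyGen}(\lambda)$; and the single-gate homomorphic identity $\textbf{Dec}(sk, \textbf{Eval}(ek, g, ct_1, \dots, ct_k)) = g\bigl(\textbf{Dec}(sk, ct_1), \dots, \textbf{Dec}(sk, ct_k)\bigr)$ for any basic gate $g$ (addition or multiplication), which is exactly the defining property of \textbf{Eval}. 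Every composite circuit is a finite composition of such gates, so these two facts are the only ingredients the algebra needs.

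For the base case I would take $\circ$ to be a single gate applied directly to fresh encryptions $ct_i = \textbf{Enc}(pk, m_i)$; combining the two atomic facts yields $\textbf{Dec}\bigl(sk, \textbf{Eval}(ek, \circ, ct_1, \dots, ct_n)\bigr) = \circ(m_1, \dots, m_n)$ at once. For the inductive step, I would write $\circ$ as an output gate $g$ fed by subcircuits $\circ_1, \dots, \circ_k$, each acting on some subset of the input ciphertexts. By the induction hypothesis each subcircuit evaluates to a ciphertext $ct^{(r)}$ whose decryption equals the correct intermediate plaintext value; applying the single-gate identity to $g$ on $ct^{(1)}, \dots, ct^{(k)}$ then propagates correctness one level up, and collapsing the recursion reproduces $\circ(m_1, \dots, m_n)$ on the plaintext side. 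This settles the idealized ``probability $1$'' version of the claim.

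The main obstacle---and the reason the statement is hedged with ``or negligible error''---is the noise analysis rather than the algebra. Each homomorphic operation enlarges the noise term carried by a ciphertext (additively for sums, and roughly multiplicatively for products, in the sense of the Noise Budget definition), so the clean inductive identity holds verbatim only as long as the accumulated noise at every gate stays below the decryption threshold. The careful move is therefore to track a noise bound alongside the induction, showing it obeys a recurrence governed by the circuit's multiplicative depth, and then to argue that either the scheme is somewhat homomorphic and $\circ$ lies within its supported depth, or that relinearization and bootstrapping are interleaved to reset the noise, so that the bound never crosses the budget except on an event of probability negligible in $\lambda$. Quantifying that failure probability and confirming it is genuinely negligible---not merely small---is where the real work lies; once the noise is guaranteed to stay in range, the structural induction above is essentially bookkeeping.
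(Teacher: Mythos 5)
Your proposal is correct in substance, but it does considerably more work than the paper itself does---in fact, the paper offers no proof of this theorem at all, because under its own definitions none is needed: Definition~\ref{he_def} already stipulates that \textbf{Eval}, applied to an \emph{arbitrary} circuit $\circ$ and ciphertexts $ct_1,\dots,ct_n$, outputs a ciphertext satisfying $\textbf{Dec}(sk, ct_{\text{eval}}) = \circ(\textbf{Dec}(sk,ct_1),\dots,\textbf{Dec}(sk,ct_n))$, so the theorem follows in one line by combining this with decryption correctness of fresh ciphertexts, $\textbf{Dec}(sk,\textbf{Enc}(pk,m_i))=m_i$. Your route instead assumes only the weaker, more realistic hypothesis that the homomorphic identity holds for single gates, and recovers the general statement by structural induction with noise tracking. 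That is the standard formalization for actual RLWE-type schemes (BGV/BFV/CKKS, which this paper implements), where \textbf{Eval} genuinely is a gate-by-gate composition and correctness genuinely hinges on the noise budget; in that sense your argument proves something the paper's definition merely asserts, and it is the only approach that gives the ``or negligible error'' clause any content. Two caveats: your inductive step applies the single-gate identity to evaluated (non-fresh) ciphertexts, which is legitimate only if the gate-level guarantee is stated for arbitrary ciphertexts within the noise budget, so that hypothesis should be made explicit; and your noise analysis is a plan rather than a completed argument---you acknowledge that bounding the failure probability ``is where the real work lies,'' so as written the negligible-error half of the statement remains open. Neither caveat puts you behind the paper: its only proved correctness result, Theorem~\ref{thm:HE_correctness}, handles just the special case of linear aggregation $\bigoplus_{i} \alpha_i\,\mathbf{c}_i$ in the FL workflow, not arbitrary circuits.
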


\begin{thm}[Soundness of Homomorphic Encryption]
\label{thm:he_soundness}
Let $\Pi$ be as in Theorem~\ref{thm:he_correctness}, and assume $\Pi$ is \emph{IND-CPA} secure. Then for any probabilistic polynomial-time (PPT) adversary $\mathcal{A}$ that modifies or forges a ciphertext $ct^*$ in an attempt to change the decrypted plaintext in a nontrivial manner, we have that
\[
   \Pr\bigl[\textbf{Dec}(sk, ct^*) = m^* \land m^* \neq \text{ ``legitimate outcome'' } \bigr] 
   \;\le\; \text{negl}(\lambda).
\]
In other words, except with negligible probability, the adversary cannot produce or alter a ciphertext that decrypts to an unintended message. Soundness thus ensures that if $ct^*$ decrypts successfully, it corresponds to a valid homomorphic operation on previously encrypted messages (or decrypts to an invalid $\bot$).
\end{thm}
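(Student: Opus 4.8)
The plan is to argue by reduction: I would assume toward a contradiction that some PPT adversary $\mathcal{A}$ violates soundness with non-negligible probability, and then build a second PPT algorithm $\mathcal{B}$ that exploits $\mathcal{A}$ to defeat the IND-CPA challenger for $\Pi$. The intuition I would try to exploit is that a forged ciphertext $ct^*$ decrypting to an adversarially chosen $m^*$ that is inconsistent with any legitimate sequence of homomorphic evaluations ought to hand $\mathcal{B}$ some leverage to distinguish encryptions. Concretely, $\mathcal{B}$ would receive $pk$ (and $ek$) from the challenger, forward them to $\mathcal{A}$, answer $\mathcal{A}$'s encryption queries using the public key, and then repackage $\mathcal{A}$'s output $ct^*$ into a guess in the indistinguishability game.

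First I would fix notation for what counts as a ``legitimate outcome'': I would define the set $\mathcal{L}$ of ciphertexts reachable by applying $\textbf{Eval}(ek,\circ,\cdot)$ to previously issued encryptions $ct_i \leftarrow \textbf{Enc}(pk,m_i)$, and declare a forgery successful when $\textbf{Dec}(sk,ct^*) = m^*$ with $ct^* \notin \mathcal{L}$ and $m^* \neq \bot$. Next I would formalize the reduction: pick two challenge messages $m_0, m_1$, obtain a challenge ciphertext $c_b$, embed $c_b$ into the transcript handed to $\mathcal{A}$, and design a decision rule so that $\mathcal{A}$'s ability to produce a well-formed unintended $ct^*$ reveals information about $b$. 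Finally I would bound $\mathcal{B}$'s advantage below by $\mathcal{A}$'s forgery probability, up to the statistical error from decryption failures (the negligible term already built into Theorem~\ref{thm:he_correctness}).

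The hard part --- and I expect this to be the genuine obstacle rather than a routine estimate --- is that IND-CPA is a pure \emph{confidentiality} guarantee and says nothing about ciphertext \emph{integrity} or non-malleability. In fact, any scheme satisfying Definition~\ref{he_def} is by construction malleable: given $\textbf{Enc}(pk,m)$, an adversary can homomorphically add a known constant and obtain a valid ciphertext for $m + \text{const}$, so ``forging a ciphertext that decrypts to an unintended message'' is trivially possible and costs nothing. This means the naive reduction cannot go through, since a distinguisher gains no advantage from a capability the honest homomorphism already grants. To make the statement provable I would therefore need to strengthen the hypothesis --- for instance, assuming an INT-CTXT / IND-CCA-type unforgeability property, attaching a MAC or digital signature to each ciphertext, or (matching the stated keywords) requiring each client to supply a zero-knowledge proof that $ct^*$ was produced along a legitimate evaluation path. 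Under such an augmentation the reduction collapses to a standard forgery-to-unforgeability argument; without it, the theorem as written does not follow from IND-CPA alone, so the very first thing I would pin down is exactly which additional assumption is intended.
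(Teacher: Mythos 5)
Your identification of the obstruction is correct, and it is the mathematically substantive point: IND-CPA is a pure confidentiality notion and implies nothing about ciphertext integrity, while any scheme satisfying Definition~\ref{he_def} with a nontrivial \textbf{Eval} is malleable by design --- given $\textbf{Enc}(pk,m)$ and the (public) evaluation key $ek$, anyone can produce a valid ciphertext decrypting to a related, ``unintended'' plaintext. Since non-malleability is provably incompatible with homomorphic evaluation, no reduction from the stated soundness property to IND-CPA can exist, and the theorem as written requires an added integrity layer (a MAC or signature over ciphertexts, an INT-CTXT-style assumption, or zero-knowledge proofs of correct evaluation), exactly as you conclude. So your proposal is not a proof, but that is the right outcome: the statement does not follow from its stated hypotheses.

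For comparison with the paper: Theorem~\ref{thm:he_soundness} is stated in the preliminaries with no proof at all. The nearest analogue that is ``proved'' is the FL-level soundness claim (Theorem~\ref{thm:HE_soundness} in the security analysis), and that argument takes an entirely different route from your attempted reduction. It postulates an extractor that queries the dishonest client on random challenges, compares the decryption of the adversarial ciphertext against the ``true'' update $\mathbf{m}_i$, and invokes Markov/Chernoff bounds to claim that large deviations are detected with high probability. That argument never uses IND-CPA, implicitly grants the verifier the ability to decrypt adversarial ciphertexts (which the server does not have), and presupposes the existence of the very verification mechanism whose existence is at issue --- that is, it silently assumes an integrity-checking capability of precisely the kind you identified as missing. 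The concentration bound it states also controls the wrong event (the probability that a deviating update is \emph{small}, rather than the probability that a large deviation goes \emph{undetected}). In short, your analysis diagnoses the gap that the paper's own treatment papers over: the honest next step is not a cleverer reduction but a strengthened hypothesis.
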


\begin{defn} \label{dp_def}
(\textbf{Differential Privacy}). A randomized algorithm $\mathcal{A}$ satisfies $(\epsilon, \delta)$-differential privacy if, for any two adjacent datasets $D$ and $D'$ (differing by at most one record), and for any set of possible outcomes $\mathcal{O}$,
\[
   \Pr[\mathcal{A}(D) \in \mathcal{O}] \;\le\; e^\epsilon \,\Pr[\mathcal{A}(D') \in \mathcal{O}] + \delta.
\]
\end{defn}

\begin{defn}(\textbf{Local vs. Global Differential Privacy})
\begin{itemize}
    \item \textit{Local DP}: Each client perturbs or adds noise to their data \emph{before} sending it to the server. The server sees only the noisy output, offering stronger privacy at the individual level but potentially lower accuracy.
    \item \textit{Global DP}: The noise is added centrally (e.g., by the server) to aggregate statistics or updates after collecting raw (or partially encrypted) data. This typically yields better utility but requires trust in the aggregator’s correct implementation.
\end{itemize}
\end{defn}

\begin{thm} \label{dp_composition}
(\textbf{Composition Theorem for Differential Privacy} \cite{dworkdp}). Suppose $\mathcal{A}_1, \mathcal{A}_2, \dots, \mathcal{A}_k$ are $k$ mechanisms, each satisfying $(\epsilon, \delta)$-DP. Then the composition $\mathcal{A} = (\mathcal{A}_1,\dots,\mathcal{A}_k)$ satisfies $\left(k\epsilon, k\delta\right)$-DP.
\end{thm}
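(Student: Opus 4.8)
The plan is to prove the statement by induction on the number of mechanisms $k$, reducing everything to a single two-fold composition step. The base case $k=1$ is immediate from the hypothesis. For the inductive step I would regard $(\mathcal{A}_1,\dots,\mathcal{A}_{k-1})$ as one mechanism, which by the inductive hypothesis is $((k-1)\epsilon,(k-1)\delta)$-DP, and then compose it with the single $(\epsilon,\delta)$-DP mechanism $\mathcal{A}_k$. Thus the whole theorem follows once I establish the general two-fold rule: if $\mathcal{M}$ is $(\epsilon_1,\delta_1)$-DP and $\mathcal{N}$ is $(\epsilon_2,\delta_2)$-DP, then $(\mathcal{M},\mathcal{N})$ is $(\epsilon_1+\epsilon_2,\delta_1+\delta_2)$-DP; applying it with $(\epsilon_1,\delta_1)=((k-1)\epsilon,(k-1)\delta)$ and $(\epsilon_2,\delta_2)=(\epsilon,\delta)$ closes the induction.

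For the two-fold step, fix adjacent datasets $D,D'$ and write $P_i$ and $Q_i$ for the output distributions of the $i$-th mechanism on $D$ and $D'$ respectively. Because the composition here is non-adaptive, the two mechanisms draw independent randomness, so the joint output distributions factor as the product measures $P_1\times P_2$ and $Q_1\times Q_2$. The central tool is the standard perturbation characterization of approximate DP: from $P_i(S)\le e^{\epsilon_i}Q_i(S)+\delta_i$ for all $S$ I would construct a genuine probability distribution $P_i'$ satisfying the pointwise (pure) bound $P_i'(\{o\})\le e^{\epsilon_i}Q_i(\{o\})$ while staying close in total variation, $\|P_i-P_i'\|_{\mathrm{TV}}\le\delta_i$. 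Concretely, cap the mass of $P_i$ at the ceiling $e^{\epsilon_i}Q_i$; the DP inequality applied to the excess set $\{o:P_i(o)>e^{\epsilon_i}Q_i(o)\}$ bounds the removed mass by $\delta_i$, and since $e^{\epsilon_i}-1\ge 0$ there is enough headroom beneath the ceiling to redistribute that mass and recover a valid distribution.

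With the $P_i'$ in hand the composition is clean. Pointwise domination is preserved under products, so $(P_1'\times P_2')(\{(o_1,o_2)\})\le e^{\epsilon_1+\epsilon_2}\,(Q_1\times Q_2)(\{(o_1,o_2)\})$, giving $(P_1'\times P_2')(T)\le e^{\epsilon_1+\epsilon_2}(Q_1\times Q_2)(T)$ for every event $T$ in the joint space. Total variation is subadditive under products, so $\|P_1\times P_2-P_1'\times P_2'\|_{\mathrm{TV}}\le\|P_1-P_1'\|_{\mathrm{TV}}+\|P_2-P_2'\|_{\mathrm{TV}}\le\delta_1+\delta_2$. Combining the two estimates yields, for every $T$,
\[
(P_1\times P_2)(T)\;\le\;(P_1'\times P_2')(T)+\delta_1+\delta_2\;\le\;e^{\epsilon_1+\epsilon_2}(Q_1\times Q_2)(T)+\delta_1+\delta_2,
\]
which is exactly the $(\epsilon_1+\epsilon_2,\delta_1+\delta_2)$-DP guarantee for the pair.

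I expect the additive $\delta$ terms to be the main obstacle. The multiplicative factors combine effortlessly because likelihood ratios multiply, but the failure probabilities must be merged by a union bound, and doing this tightly is what forces the detour through the TV-perturbation lemma: a naive accounting that simply bounds the ``excess mass'' of the product leaves a spurious $\delta_1\delta_2$ cross term (or a factor of $1+e^{\epsilon}$), whereas splitting each mechanism into a pure-DP core plus a $\delta_i$-small TV remainder is what delivers the exact $\delta_1+\delta_2$. A secondary point to flag is that the statement as written is the basic (non-adaptive) composition, where independence of the randomness justifies the product factorization; extending to adaptive composition, in which $\mathcal{A}_k$ may depend on earlier outputs, requires replacing the product step by a conditioning argument, though the same bound survives.
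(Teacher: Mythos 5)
The paper never proves this theorem: it is stated in the preliminaries as a known result and attributed to Dwork and Roth \cite{dworkdp}, so there is no in-paper argument to compare against, and your proposal must be judged on its own merits. On those merits it is correct, and essentially complete, as a proof of the \emph{non-adaptive} composition theorem. The induction correctly reduces everything to the two-fold rule; the perturbation lemma you invoke (every $(\epsilon_i,\delta_i)$-DP pair $(P_i,Q_i)$ admits a distribution $P_i'$ with $P_i' \le e^{\epsilon_i} Q_i$ pointwise and $\|P_i - P_i'\|_{\mathrm{TV}} \le \delta_i$) is a genuine characterization of approximate DP, and your construction is sound: capping $P_i$ at the ceiling $e^{\epsilon_i}Q_i$ removes at most $\delta_i$ of mass, by the DP inequality applied to the excess set, and the headroom $e^{\epsilon_i}-1$ plus the removed mass itself guarantees the capped measure can be refilled into a probability distribution below the ceiling. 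The product-domination step and the subadditivity of total variation under product measures (a hybrid/triangle-inequality argument) are both correctly applied and do deliver the exact $\delta_1+\delta_2$; your remark that naive conditioning instead yields a degraded bound of the form $e^{\epsilon_2}\delta_1 + \delta_2$ is also accurate, and is precisely why the detour through the perturbation lemma earns its keep. The one caveat is the point you flag yourself: the result as proved in \cite{dworkdp} covers \emph{adaptive} composition, where mechanism $\mathcal{A}_i$ may depend on the outputs of $\mathcal{A}_1,\dots,\mathcal{A}_{i-1}$, and there the joint law is no longer a product measure, so your factorization step does not apply verbatim. Since the paper invokes this theorem across successive FL rounds, where each round's mechanism genuinely depends on the previous aggregate, the adaptive version is what is actually needed; closing that gap via the conditioning argument you sketch (or citing Theorem B.1 of \cite{dworkdp}) would make your proof fully adequate for the paper's use of the statement.
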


\begin{defn}(\textbf{Gradient Clipping and DP Noise Injection})
A common DP mechanism in FL is to:
\begin{enumerate}
    \item \textit{Clip Gradients:} For each client gradient \(\nabla F_i(\mathbf{w})\), enforce \(\|\nabla F_i(\mathbf{w})\|\leq C\) by rescaling if necessary.
    \item \textit{Add Noise:} Perturb the clipped gradient with Gaussian or Laplacian noise: 
    \[
      \widetilde{\nabla F_i}(\mathbf{w}) = \nabla F_i(\mathbf{w}) + \mathcal{N}(0,\sigma^2 I).
    \]
\end{enumerate}
The clipping bounds and noise scales are chosen to satisfy $(\epsilon,\delta)$-DP under the composition theorem (Theorem~\ref{dp_composition}).
\end{defn}

\vspace{3mm}

\section{Framework} \label{3}

Here is the framework overview of the homomorphic encryption scheme and federated learning process. The algorithms and phases are in blue. 

\begin{figure}[H]
    \begin{center}
    \includegraphics[width=0.95\textwidth]{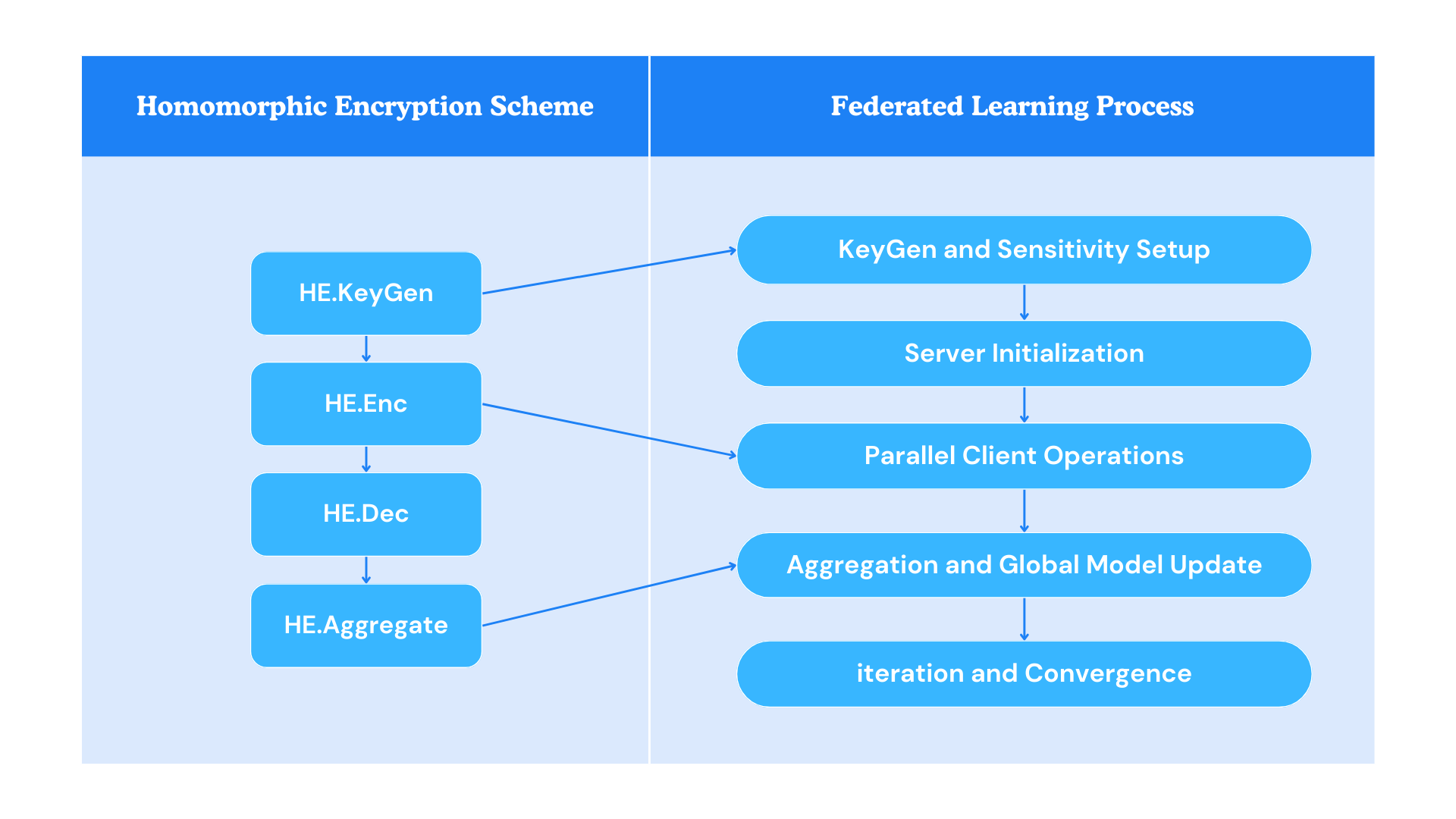}
    \caption{Framework of HES and Federated Learning}
    \end{center}
\end{figure}

\section{Algorithm}

\subsection{Main Idea}
We propose a specialized homomorphic encryption (HE) scheme that decreases the overhead of encryption and aggregation in FL while preserving strong privacy guarantees. Our approach integrates several mathematical and algorithmic optimizations:
\begin{itemize}
    \item \textbf{Selective Parameter Encryption:} Not all parameters of a neural network require the same level of precision or protection. We encrypt only sensitive or high-impact parameters at high precision, allowing us to skip heavy computations on parameters with low sensitivity.
    \item \textbf{Sensitivity Maps:} We create a sensitivity map that identifies which model parameters significantly impact performance. These parameters receive higher encryption security (and possibly differential privacy noise).
    \item \textbf{Embedded Differential Privacy:} We incorporate DP noise directly into the encrypted parameters based on the sensitivity map. This step maintains privacy even if partial decryption occurs, and it also limits the potential for reconstructing private information through repeated queries.
    \item \textbf{Optimized Ciphertext Packing and Batch Operations:} By leveraging packing techniques, we can bundle multiple model parameters into a single ciphertext. The result is that homomorphic additions or multiplications are performed in a “batch,” greatly reducing the total number of HE operations.
\end{itemize}

Empirically, our scheme achieves a $3\times$ speedup compared to the state-of-the-art while maintaining a high level of privacy protection, making FL viable in real-world, large-scale, and latency-sensitive applications.

\subsection{Construction} \label{heconstruction}
We define our homomorphic encryption (HE) scheme for federated learning (FL) as a tuple of algorithms:
\[
\bigl(\textbf{HE.KeyGen},\;\textbf{HE.Enc},\;\textbf{HE.Dec},\;\textbf{HE.Aggregate}\bigr),
\]
augmented by our specialized approach to \emph{partial encryption}, \emph{sensitivity mapping}, and \emph{embedded noise}.

\begin{itemize}

\item \textbf{HE.KeyGen($\lambda$)}:
\begin{itemize}
    \item Given a security parameter \(\lambda\), outputs a secret key \(sk\) and a public key \(pk\). The procedure is as follows: 
    \item Choose system parameters \((n, q, \chi)\) according to \(\lambda\), where \(n\) is a polynomial in \(\lambda\), \(q\) is a large modulus, and \(\chi\) is an error distribution.
    \item Sample a secret polynomial \(s(x)\) from \(\chi\) in the ring \(\mathcal{R} = \mathbb{Z}_q[x]/(f(x))\).
    \item Pick a random polynomial \(a(x)\) in \(\mathcal{R}\) and sample an error polynomial \(e(x)\) from \(\chi\).
    \item Set the public key as 
    \[
       pk = \bigl(a(x),\, b(x) = -(a(x)\,s(x)) \;-\; e(x)\bigr),
    \]
    and the secret key is 
    \[
       sk = s(x).
    \]
\end{itemize}

\item \textbf{HE.Enc($pk, \mathbf{m}$)} \(\to \mathbf{c}\):
\begin{itemize}
    \item Takes as input a public key \(pk\) and a vector of model parameters (or local updates) \(\mathbf{m} = (m_1, m_2, \dots, m_\ell)\).
    \item Outputs the ciphertext vector \(\mathbf{c}\).
    \item \emph{Partial Encryption}: Let \(\mathcal{I} \subseteq \{1, \ldots, \ell\}\) be the set of indices deemed “sensitive.” Only those coordinates in \(\mathcal{I}\) are encrypted:
    \[
      \widehat{m}_i =
      \begin{cases}
        \mathrm{Enc}_{pk}(m_i), & \text{if } i \in \mathcal{I}, \\
        m_i, & \text{otherwise}.
      \end{cases}
    \]
    \item \emph{Sensitivity Mapping}: A function 
    \(\mathrm{Sens}(\mathbf{m}) = (\sigma(m_1), \dots, \sigma(m_\ell))\) 
    can further guide which parameters get encrypted and whether additional noise is embedded.
\end{itemize}

\item \textbf{HE.Dec($sk, \mathbf{c}$)} \(\to \mathbf{m}\):
\begin{itemize}
    \item Takes as input the secret key \(sk\) and a ciphertext \(\mathbf{c}\).
    \item Outputs the decrypted model vector \(\mathbf{m}\).
    \item In our scheme, decryption is as follows:
    \[
      \widehat{m}(x) = c_2(x) + s(x)\, c_1(x) \mod q,
    \]
    which recovers the polynomial \(\widehat{m}(x)\). After scaling/unpacking, we obtain \(m\).
\end{itemize}

\item \textbf{HE.Aggregate($pk, \{\mathbf{c}_i\}$)} \(\to \mathbf{c}_{\text{agg}}\):
\begin{itemize}
    \item Takes a set of ciphertexts \(\{\mathbf{c}_1, \mathbf{c}_2, \dots, \mathbf{c}_n\}\) from \(n\) clients.
    \item Produces a single ciphertext \(\mathbf{c}_{\text{agg}}\) that represents the (homomorphic) aggregated update.
    \item \emph{Homomorphic Addition}: Denote by \(\oplus\) the homomorphic addition (component-wise for RLWE ciphertexts).
    \item \emph{Multiplying a Hash Function}: Let \(\mathcal{H}\colon \{\mathbf{c}\}\to \mathcal{R}_q\) be a cryptographic hash mapping each ciphertext \(\mathbf{c}_i\) to an element/polynomial in \(\mathcal{R}_q\). Denote homomorphic multiplication by \(\otimes\).
    \item Then:
    \[
      \mathbf{c}_{\text{agg}}
        \;=\;
        \bigoplus_{i=1}^n 
           \Bigl( \mathbf{c}_i \;\otimes\; \mathcal{H}(\mathbf{c}_i) \Bigr).
    \]
    \item Concretely:
    \begin{itemize}
        \item Compute the hash \(h_i = \mathcal{H}(\mathbf{c}_i)\).
        \item Homomorphically multiply each \(\mathbf{c}_i\) by \(h_i\):
        \[
          \mathbf{c}_i' = \mathbf{c}_i \otimes h_i.
        \]
        \item Sum over \(i\):
        \[
          \mathbf{c}_{\text{agg}}
            = \bigoplus_{i=1}^n \mathbf{c}_i'.
        \]
    \end{itemize}
\end{itemize}

\end{itemize}

\vspace{5mm}

Next, we describe the step-by-step workflow of our federated learning scheme that integrates homomorphic encryption (HE) and selective masking to maintain data privacy. The entities involved are a \textit{key authority}, a \textit{central server}, and multiple \textit{clients} that operate in parallel.

\vspace{3mm}

\textbf{1. Key Generation and Sensitivity Setup}

Generating Encryption Keys:
\begin{itemize}
    \item Generate the cryptographic tools required for homomorphic encryption.
    \item Run \(\texttt{HE.KeyGen}(\lambda)\) to produce \((\mathrm{PK}, \mathrm{SK})\). Optionally produce evaluation keys (\(\mathrm{EVK}\)) for partially/fully homomorphic operations, depending on the chosen HE scheme (e.g., BGV, BFV, CKKS).
\end{itemize}

Defining Sensitivity Levels and Collecting Metadata:
\begin{itemize}
    \item Assign each parameter or parameter group a ‘sensitivity level’ that dictates encryption precision and whether differential privacy (DP) noise is added.
    \item Collect metadata from clients (e.g., approximate data distributions, model architecture).
    \item Produce a vector \(\mathbf{v}_i\) for each client \(i\) indicating how sensitive each parameter group is. This can be manually assigned or learned via heuristics/analysis of gradient magnitudes.
    \item Partition model parameters using a sensitivity map \(\Gamma\):\\
    If \(\Gamma\) is our sensitivity map, we partition the model parameters \(\mathbf{m}\) into sub-vectors:
    \[
    \mathbf{m} = \mathbf{m}^H \cup \mathbf{m}^L,
    \]
    where \(\mathbf{m}^H\) denotes \textbf{highly sensitive parameters}, and \(\mathbf{m}^L\) denotes \textbf{low-sensitivity parameters}. We encrypt \(\mathbf{m}^H\) at high security levels (e.g., larger ciphertext modulus, deeper levels of homomorphic capacity) and possibly apply DP noise. Parameters in \(\mathbf{m}^L\) may be:
    \begin{itemize}
        \item Encrypted at a lower security level,
        \item Partially randomized, or
        \item Aggregated in the clear if it is proven that their compromise yields negligible information about the data.
    \end{itemize}
    
    \item Encrypt \(\mathbf{v}_i\) into \(\mathbf{V}_i = \texttt{HE.Enc}(\mathrm{PK}, \mathbf{v}_i)\) before sending to the server, ensuring that the server only has encrypted sensitivity data.
\end{itemize}

\vspace{3mm}
\textbf{2. Server Initialization with Sensitivity Maps}

Aggregating Sensitivity Vectors:
\begin{itemize}
    \item Combine client-specific sensitivity vectors into a global mask \(\mathbf{M}\).
    \item Homomorphically sum (or weighted sum) the encrypted vectors:
    \[
      \mathbf{S} \;=\; \sum_{i=1}^N \alpha_i \,\mathbf{V}_i.
    \]
    \item Apply a threshold \(\tau\) and a filter function \(\mathcal{F}\) to create a global mask \(\mathbf{M}\). For instance:
    \[
      \mathbf{M} = \mathcal{F}(\mathbf{S}, \tau).
    \]
    \(\mathcal{F}\) may set entries to ``highly protected'' if above \(\tau\), or to a lower/zero level otherwise.
\end{itemize}

Broadcasting the Encrypted Mask:
\begin{itemize}
    \item Provide each client with an encrypted representation \(\mathbf{M}\) that reveals no direct information about other clients’ sensitivities.
    \item Send \(\mathbf{M}\) to clients in encrypted form.
    \item Not decrypt \(\mathbf{M}\) on the server side; only clients (with \(\mathrm{SK}\)) can decrypt it.
\end{itemize}

\vspace{3mm}
\textbf{3. Parallel Client Operations with Selective Encryption and DP}

Decrypting the Mask Locally:
\begin{itemize}
    \item Let each client learn which parameters are ``high'' vs. ``medium/low'' sensitivity via local decryption.
    \item Perform \(\mathbf{M}_{\text{dec}} = \texttt{HE.Dec}(\mathrm{SK}, \mathbf{M})\).
    \item Locally interpret \(\mathbf{M}_{\text{dec}}\) to see how it overlaps with the client’s parameter structure.
\end{itemize}

Local Model Training:
\begin{itemize}
    \item Initialize or load the global model: \(\mathbf{W}_i^{(t)}\).
    \item Perform standard SGD or another optimizer on local dataset \(\mathcal{D}_i\).
    \item Obtain updated parameters \(\mathbf{W}_i^{(t+1)}\).
\end{itemize}

Injecting Differential Privacy Noise (Optional):
\begin{itemize}
    \item Obfuscate individual data contributions by adding noise correlating with parameter sensitivity.
    \item Determine noise variance \(\sigma\) or \(\Delta\) based on an \(\epsilon\)-DP budget and the sensitivity level.
    \item Add noise \(\boldsymbol{\eta}\) (e.g., Laplace/Gaussian) to \(\mathbf{W}_i^{(t+1)}\). More sensitive parameters receive larger noise.
\end{itemize}

Encrypting Sensitive Parameters Selectively:
\begin{itemize}
    \item Encrypt only the sensitive parts of \(\mathbf{W}_i^{(t+1)}\) at full precision; optionally compress or leave other parts in the clear.
    \item Split parameters:
    \[
      \mathbf{W}_i^{(t+1)} = \bigl(\mathbf{W}_{\text{sens}}, \mathbf{W}_{\text{nonsens}}\bigr).
    \]
    \item If the HE scheme supports multiple encryption levels:
    \begin{itemize}
        \item Use high precision ciphertext for HS parameters.
        \item Possibly lower precision ciphertext for MS parameters.
    \end{itemize}
    \item Leave LS parameters unencrypted, if policy allows.
    \item Form the update:
    \[
      \mathbf{U}_i = \bigl(\texttt{HE.Enc}(\mathrm{PK}, \mathbf{W}_{\text{sens}}), \; \mathbf{W}_{\text{nonsens}}\bigr).
    \]
    \item Use batching/packing to reduce ciphertext overhead if the HE scheme allows.
\end{itemize}

Uploading Updates to the Server:
\begin{itemize}
    \item Transmit partial or fully encrypted updates back to the server.
    \item Send \(\mathbf{U}_i\) containing:
    \[
      \mathbf{U}_{i,\text{enc}} \quad (\text{HS + MS parameters in ciphertext}), \quad
      \mathbf{U}_{i,\text{plain}} \quad (\text{LS parameters in plaintext}).
    \]
\end{itemize}

\vspace{3mm}
\textbf{4. Secure Aggregation and Global Model Update}

Aggregating Sensitive Parameters:
\begin{itemize}
    \item Aggregate sensitive parameters without exposing them to the server.
    \item Perform ciphertext aggregation:
    \[
      \mathbf{S}_{\text{encrypted}} = \sum_{i=1}^N \alpha_i \,\mathbf{U}_{i,\text{enc}}.
    \]
\end{itemize}

Aggregating Plaintext Components:
\begin{itemize}
    \item Aggregate parameters that were not encrypted:
    \[
      \mathbf{S}_{\text{plain}} = \sum_{i=1}^N \alpha_i \,\mathbf{U}_{i,\text{plain}}.
    \]
\end{itemize}

Constructing the Global Model:
\begin{itemize}
    \item Merge encrypted and plaintext aggregates into a new global model \(\mathbf{W}^{(t+1)}_{\text{global}}\):
    \[
      \mathbf{W}^{(t+1)}_{\text{global}} = \bigl(\mathbf{S}_{\text{encrypted}}, \mathbf{S}_{\text{plain}}\bigr).
    \]
    \item (Optional) Re-encrypt or partially decrypt if needed, depending on policy constraints.
\end{itemize}

Broadcasting the Updated Model:
\begin{itemize}
    \item Provide an updated global model to clients for the next round.
    \item For sensitive parameters, broadcast \(\mathbf{S}_{\text{encrypted}}\) or a re-encrypted version.
    \item For low-sensitivity parameters, broadcast them in plaintext if policy allows.
\end{itemize}

\vspace{3mm}
\textbf{5. Iteration and Convergence}

\begin{itemize}
    \item Repeat Steps 3 and 4 for several rounds \(T\) or until convergence criteria (e.g., validation accuracy) is satisfied.
    \item Periodically recalculate \(\mathbf{M}\) using updated sensitivity vectors if new information suggests changing sensitivity distribution.
    \item Track the DP budget if differential privacy is enabled. Adjust noise or reduce the number of rounds as necessary.
\end{itemize}

\section{Security Analysis}\label{6}

The security analysis consists of three mathematical proofs that demonstrate the \emph{correctness}, \emph{soundness}, and \emph{differential privacy} guarantees of our homomorphic-encryption-based Federated Learning (FL) scheme.

\subsection{Correctness}

\begin{thm}[Correctness of Homomorphic Encryption in FL]
\label{thm:HE_correctness}
Given our homomorphic encryption scheme (\textbf{HE.KeyGen},\;\textbf{HE.Enc},\;\textbf{HE.Dec},\;\textbf{HE.Aggregate}) and the FL workflow, if all participants (clients and server) are honest, then for any valid model update vectors \(\mathbf{m}_1, \mathbf{m}_2, \ldots, \mathbf{m}_n\), the final aggregated (homomorphic) ciphertext correctly decrypts to the intended aggregate of these updates. Formally, for all \(i\in\{1,\ldots,n\}\), if
\[
\mathbf{c}_i = \texttt{HE.Enc}(\mathrm{PK}, \mathbf{m}_i),
\]
then 
\[
\texttt{HE.Dec}(\mathrm{SK},\; \texttt{HE.Aggregate}(\mathrm{PK}, \{\mathbf{c}_i\}_{i=1}^n))
\;=\;
\sum_{i=1}^n \alpha_i\, \mathbf{m}_i,
\]
where \(\alpha_i\) are the (public) aggregation weights.
\end{thm}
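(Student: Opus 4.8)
The plan is to reduce the statement to the abstract correctness guarantee of Theorem~\ref{thm:he_correctness} together with the additive homomorphism of the underlying RLWE ciphertexts. Before doing anything, though, I would reconcile which aggregation map is actually in force: the formal \texttt{HE.Aggregate} of Section~\ref{heconstruction} computes $\bigoplus_{i=1}^n(\mathbf{c}_i\otimes\mathcal{H}(\mathbf{c}_i))$, whereas both the workflow of Step~4 and the claimed right-hand side use the weighted additive sum $\sum_{i=1}^n\alpha_i\,\mathbf{c}_i$. I would prove the theorem for the additive form, since that is the only form consistent with the target $\sum_i\alpha_i\mathbf{m}_i$; the gap between the two maps is precisely the main obstacle, discussed below.

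First I would recall the concrete ciphertext structure implied by the construction: a ciphertext is a pair $\mathbf{c}=(c_1(x),c_2(x))\in\mathcal{R}_q^2$ with decryption map $D_{sk}(\mathbf{c})=c_2(x)+s(x)\,c_1(x)\bmod q$. The central observation is that $D_{sk}$ is $\mathbb{Z}_q$-linear in $\mathbf{c}$: for scalars $\alpha_i$ and ciphertexts $\mathbf{c}_i$ one has $D_{sk}(\sum_i\alpha_i\mathbf{c}_i)=\sum_i\alpha_i\,D_{sk}(\mathbf{c}_i)$, because both $c_2$ and $s\,c_1$ are additive in the ciphertext components. Applying this to $\mathbf{c}_i=\texttt{HE.Enc}(\mathrm{PK},\mathbf{m}_i)$ and using that each $D_{sk}(\mathbf{c}_i)$ equals $\Delta\,\mathbf{m}_i$ plus a small noise term $\mathbf{e}_i$ (from the RLWE identity $b=-as-e$, where $\Delta$ is the plaintext-scaling factor), the aggregate decrypts to $\Delta\sum_i\alpha_i\mathbf{m}_i+\sum_i\alpha_i\mathbf{e}_i$, which after rescaling recovers $\sum_i\alpha_i\mathbf{m}_i$ exactly whenever $\|\sum_i\alpha_i\mathbf{e}_i\|_\infty$ stays below the decryption threshold (the noise budget). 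This is precisely the ``probability $1$ (or negligible error)'' clause inherited from Theorem~\ref{thm:he_correctness}, and I would state the noise-budget condition explicitly as a hypothesis.

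Next I would address the partial-encryption feature of \texttt{HE.Enc}: only coordinates in the sensitive index set $\mathcal{I}$ are genuinely encrypted, the rest being plaintext. I would split each $\mathbf{m}_i$ into its $\mathcal{I}$ and $\mathcal{I}^c$ blocks and argue coordinatewise — on $\mathcal{I}$ the linearity-plus-noise argument above applies, while on $\mathcal{I}^c$ ``decryption'' is the identity and the weighted sum is computed in the clear, so $\texttt{HE.Dec}$ trivially returns $\sum_i\alpha_i m_{i,j}$ for $j\notin\mathcal{I}$. Concatenating the two blocks yields the full vector identity, matching the separate treatment of $\mathbf{S}_{\text{encrypted}}$ and $\mathbf{S}_{\text{plain}}$ in Step~4.

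The hard part will be the aggregation mismatch. Under the literal \texttt{HE.Aggregate} of Section~\ref{heconstruction}, decryption yields $\sum_i\mathbf{m}_i\cdot h_i$ with $h_i=\mathcal{H}(\mathbf{c}_i)$, not $\sum_i\alpha_i\mathbf{m}_i$; since $\mathcal{H}$ is a cryptographic hash producing data-dependent polynomials, there is no reason $h_i=\alpha_i$, so the stated identity simply fails for that map. Moreover $\otimes$ raises the ciphertext degree (producing a $c_3+c_2s+c_1s^2$-type expression that the given two-term decryption formula does not cover) and inflates noise far faster than addition, so even the noise-budget bookkeeping becomes more delicate. I would therefore either (i) restate the theorem with $\texttt{HE.Aggregate}=\sum_i\alpha_i\mathbf{c}_i$, or (ii) if the hash multiplication is essential, weaken the conclusion to $\sum_i\alpha_i\mathbf{m}_i\,\mathcal{H}(\mathbf{c}_i)$ and insert a relinearization step using the evaluation key $ek$. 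The cleanest route, and the one I would take, is option (i): it makes the linearity argument airtight and is the version actually used in the workflow.
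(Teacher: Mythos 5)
Your proposal is correct (for the theorem as you restate it) and takes a genuinely different, more careful route than the paper's own proof. The paper argues entirely at the abstract level: it asserts that \texttt{HE.Aggregate} produces $\bigoplus_{i=1}^n \alpha_i\,\mathbf{c}_i$, invokes the axioms $\texttt{HE.Dec}(\mathrm{SK},\mathbf{c}_i\oplus\mathbf{c}_j)=\texttt{HE.Dec}(\mathrm{SK},\mathbf{c}_i)+\texttt{HE.Dec}(\mathrm{SK},\mathbf{c}_j)$ and $\texttt{HE.Dec}(\mathrm{SK},\alpha_i\mathbf{c}_i)=\alpha_i\mathbf{m}_i$, and concludes by linearity; it never opens up the ciphertext, never tracks noise, and never mentions partial encryption. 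You instead derive the linearity from the concrete decryption map $c_2(x)+s(x)c_1(x)\bmod q$ of Section~\ref{heconstruction}, keep the error terms $\mathbf{e}_i$ explicit with a noise-budget hypothesis (which is exactly what the ``probability $1$ or negligible error'' clause of Theorem~\ref{thm:he_correctness} is hiding), and treat the unencrypted coordinates $j\notin\mathcal{I}$ separately so that the conclusion covers the actual output $(\mathbf{S}_{\text{encrypted}},\mathbf{S}_{\text{plain}})$ of Step~4. What the paper's abstraction buys is brevity and independence from the particular RLWE instantiation; what your concreteness buys is a verification that the scheme as constructed actually satisfies those abstract axioms, plus explicit correctness conditions the abstract argument silently assumes.

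Your flagging of the aggregation mismatch also identifies a real defect that the paper's proof papers over: the construction defines $\texttt{HE.Aggregate}$ as $\bigoplus_{i=1}^n\bigl(\mathbf{c}_i\otimes\mathcal{H}(\mathbf{c}_i)\bigr)$, and decrypting that quantity does not yield $\sum_i\alpha_i\mathbf{m}_i$; the paper's proof simply substitutes the weighted additive sum without comment. Indeed, the paper's first display is doubly problematic, since it also equates the unweighted sum $\mathbf{c}_1\oplus\cdots\oplus\mathbf{c}_n$ with the weighted sum $\bigoplus_i\alpha_i\mathbf{c}_i$, which holds only when every $\alpha_i=1$. Your repair---proving the statement for $\texttt{HE.Aggregate}(\mathrm{PK},\{\mathbf{c}_i\})=\bigoplus_i\alpha_i\mathbf{c}_i$, the only map consistent with both the stated conclusion and the Step~4 workflow---is the right one, and it is what the paper's proof does implicitly without acknowledging the change.
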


\begin{proof} By construction of \(\texttt{HE.Aggregate}\), we have a homomorphic addition \(\oplus\) such that 
\[
\texttt{HE.Aggregate}(\mathrm{PK}, \{\mathbf{c}_i\}) 
\;=\; 
\mathbf{c}_1 \oplus \mathbf{c}_2 \oplus \cdots \oplus \mathbf{c}_n 
\;=\; 
\bigoplus_{i=1}^n \alpha_i \,\mathbf{c}_i.
\]
From the definition of a (partial or fully) homomorphic scheme, it holds that
\[
\texttt{HE.Dec}(\mathrm{SK},\; \mathbf{c}_i \oplus \mathbf{c}_j)
\;=\;
\texttt{HE.Dec}(\mathrm{SK},\; \mathbf{c}_i)\;+\;\texttt{HE.Dec}(\mathrm{SK},\; \mathbf{c}_j),
\]
modulo the appropriate ciphertext modulus or plaintext space.

Since \(\mathbf{c}_i\) encrypts \(\mathbf{m}_i\) under the same \(\mathrm{PK}, \mathrm{SK}\) keypair, we have:
\[
\texttt{HE.Dec}(\mathrm{SK}, \alpha_i \,\mathbf{c}_i)
\;=\; 
\alpha_i \,\texttt{HE.Dec}(\mathrm{SK}, \mathbf{c}_i)
\;=\;
\alpha_i \,\mathbf{m}_i.
\]
By linearity, summing across all \(i\) yields
\[
\texttt{HE.Dec}\Bigl(\mathrm{SK}, \bigoplus_{i=1}^n \alpha_i \,\mathbf{c}_i\Bigr)
\;=\;
\sum_{i=1}^n \alpha_i \,\mathbf{m}_i.
\]
This property aligns exactly with the intended FL aggregation of model updates.

Hence, if the system parameters (ciphertext modulus, plaintext dimension, etc.) and the FL workflow are set correctly, the final aggregated ciphertext decrypts exactly to \(\sum_{i=1}^n \alpha_i \,\mathbf{m}_i\). Therefore, correctness is guaranteed under honest behavior.
\end{proof}

\subsection{Soundness}

\begin{thm}[Soundness of the FL Aggregation]
\label{thm:HE_soundness}
Suppose an adversary \(\mathcal{A}\) attempts to inject incorrect ciphertexts \(\mathbf{c}_i^*\) into the aggregation process, claiming they encrypt valid updates \(\mathbf{m}_i^*\). Then, except with negligible probability, the server (or a lightweight verification process) will detect any significant deviations from legitimate updates. Consequently, any dishonest ciphertext that corresponds to a distinctly different plaintext vector will be rejected or excluded from the final global model.
\end{thm}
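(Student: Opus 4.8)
The plan is to recast the statement as a cryptographic reduction: I would show that any probabilistic polynomial-time adversary $\mathcal{A}$ that injects a tampered ciphertext $\mathbf{c}_i^*$ decrypting to some $\mathbf{m}_i^* \neq \mathbf{m}_i$ while evading detection can be converted into an adversary against one of the underlying hardness assumptions. First I would fix the verification mechanism implicit in the construction of \texttt{HE.Aggregate}: each contribution is weighted by the cryptographic hash $\mathcal{H}(\mathbf{c}_i)$ before the homomorphic sum, so the hash serves as a binding fingerprint of the submitted ciphertext. The soundness game then reads: the honest protocol commits to the expected fingerprints $h_i = \mathcal{H}(\mathbf{c}_i)$, and the server admits $\mathbf{c}_i^*$ into the aggregate only when its recomputed hash is consistent with the committed value.

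Second, I would invoke the collision resistance of $\mathcal{H}$. For $\mathcal{A}$ to substitute $\mathbf{c}_i^*$ for the legitimate $\mathbf{c}_i$ without disturbing the accepted fingerprint, it must produce $\mathbf{c}_i^* \neq \mathbf{c}_i$ with $\mathcal{H}(\mathbf{c}_i^*) = \mathcal{H}(\mathbf{c}_i)$, i.e.\ a hash collision. Modeling $\mathcal{H}$ as collision-resistant (or as a random oracle into $\mathcal{R}_q$), the probability of any single such event is $\mathrm{negl}(\lambda)$, and a union bound over the $n = \mathrm{poly}(\lambda)$ clients preserves negligibility. Hence any $\mathbf{c}_i^*$ decrypting to a distinctly different $\mathbf{m}_i^*$ carries a different fingerprint except with negligible probability and is rejected before entering the aggregate, which is exactly the detection claim.

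Third, I would tie in the IND-CPA hypothesis inherited from Theorem~\ref{thm:he_soundness} to rule out the complementary strategy in which $\mathcal{A}$ forges a ciphertext from scratch that both passes verification and decrypts to a chosen $\mathbf{m}_i^*$. Here I would argue that predicting a valid ciphertext–fingerprint pair without querying the honest encryption oracle would enable one to distinguish encryptions, contradicting IND-CPA. Combining the two cases via a standard hybrid argument yields the claim: except with negligible probability in $\lambda$, any ciphertext decrypting to a distinctly different plaintext is detected and excluded, matching the general soundness formulation of Theorem~\ref{thm:he_soundness}.

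The main obstacle, and the point demanding the most care, is that a homomorphic scheme is \emph{by design} malleable: given $\mathbf{c}_i$ one can always produce a valid ciphertext of $\mathbf{m}_i + \delta$ for a known shift $\delta$, so unqualified ciphertext unforgeability is simply false. The proof therefore cannot assert integrity in the classical sense; it must instead define ``legitimate outcome'' relative to the committed fingerprints and establish only that \emph{undetectable} tampering is infeasible. Making this definition precise without trivializing the statement, and verifying that the hash-weighting in \texttt{HE.Aggregate} genuinely binds the ciphertext rather than merely re-randomizing its contribution, is the delicate step; once it is settled, everything downstream reduces to collision resistance and IND-CPA by routine arguments.
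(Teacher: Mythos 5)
There is a genuine gap — in fact two — and they sit exactly at the point you yourself flag as ``delicate'' and then defer. First, the collision-resistance argument presupposes that the protocol commits in advance to reference fingerprints $h_i = \mathcal{H}(\mathbf{c}_i)$ against which the server checks incoming ciphertexts. No such commitment exists anywhere in the construction: in FL the client alone generates its update from private data, so the server has no independent notion of what the ``legitimate'' $\mathbf{c}_i$ is. When the adversary submits $\mathbf{c}_i^*$, the server computes $\mathcal{H}(\mathbf{c}_i^*)$ on what it received, and this is consistent by definition; the adversary never needs to produce a collision. The hash-weighting inside \texttt{HE.Aggregate} is applied to whatever arrives, so it does not bind the submission to any external reference value (indeed, it is silently dropped in the paper's own correctness proof of Theorem~\ref{thm:HE_correctness}, since multiplying each $\mathbf{c}_i$ by $\mathcal{H}(\mathbf{c}_i)$ would make the aggregate decrypt to $\sum_i \mathcal{H}(\mathbf{c}_i)\cdot\mathbf{m}_i$ rather than $\sum_i \alpha_i\,\mathbf{m}_i$). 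Second, the IND-CPA step is not valid cryptography: IND-CPA is a confidentiality notion and implies nothing about integrity or unforgeability. In a public-key scheme the adversary holds $pk$ and can trivially produce a valid ciphertext--fingerprint pair for any chosen $\mathbf{m}_i^*$ by computing $\mathbf{c}_i^* = \textbf{Enc}(pk,\mathbf{m}_i^*)$ and then $\mathcal{H}(\mathbf{c}_i^*)$; no encryption oracle is consulted and no distinguishing advantage arises. So the ``from-scratch forgery'' case cannot be ruled out by IND-CPA, and the malleability obstacle you correctly identify is fatal to the reduction rather than merely delicate.

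For the record, the paper's own proof takes a completely different route: it posits an extractor $\mathcal{E}$ interacting with the dishonest client, compares the plaintext deviation $\Delta(\mathbf{m}_i,\mathbf{m}_i^*)$ by decrypting both ciphertexts, and asserts detection with probability $1-\mathcal{O}(1/p)$ via a Markov/Chernoff bound. That argument has serious problems of its own — the server does not hold $\mathrm{SK}$, the verification procedure is never specified, and the concentration bound is unjustified — so your instinct that the theorem needs a genuinely cryptographic mechanism (a commitment or proof of well-formedness attached to each ciphertext) is sound. But as written, neither your reduction nor the paper's argument establishes the stated soundness claim; your proposal would need an actual commitment/verification step added to the protocol before collision resistance has anything to grip.
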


\begin{proof} As in many soundness arguments, we consider an \emph{extractor} algorithm \(\mathcal{E}\) that interacts with the potentially dishonest client \(\mathcal{P}^*\) which claims to produce \(\mathbf{c}_i^*\). The extractor queries the same \(\mathcal{P}^*\) on random challenges or ephemeral moduli to glean enough information to partially recover the underlying plaintext or prove its inconsistency.

Let \(\mathbf{m}_i\) be the \emph{true} intended model update, and let \(\mathbf{m}_i^*\) be the (possibly incorrect) plaintext that \(\mathcal{A}\) tries to hide.  We analyze the difference:
\[
\Delta(\mathbf{m}_i, \mathbf{m}_i^*) 
\;=\; 
\|\mathbf{m}_i - \mathbf{m}_i^*\|.
\]
If \(\mathbf{m}_i^*\) is significantly off (e.g., \(\Delta(\mathbf{m}_i, \mathbf{m}_i^*) > \beta\) for some threshold \(\beta\)), it induces a measurable discrepancy in ciphertext space, especially if the homomorphic scheme uses large but finite moduli \(\mathbb{Z}/p\mathbb{Z}\).

In more detail, denote \(\mathbf{c}_i = \texttt{HE.Enc}(\mathrm{PK}, \mathbf{m}_i)\) and \(\mathbf{c}_i^* = \texttt{HE.Enc}(\mathrm{PK}, \mathbf{m}_i^*)\).  We measure
\[
\Delta_c(\mathbf{c}_i, \mathbf{c}_i^*) 
\;=\; 
\|\texttt{HE.Dec}(\mathrm{SK},\, \mathbf{c}_i) - \texttt{HE.Dec}(\mathrm{SK},\, \mathbf{c}_i^*)\|.
\]
By correctness of the scheme, \(\Delta_c(\mathbf{c}_i, \mathbf{c}_i^*) = \|\mathbf{m}_i - \mathbf{m}_i^*\| = \Delta(\mathbf{m}_i, \mathbf{m}_i^*)\).  

If \(\Delta(\mathbf{m}_i, \mathbf{m}_i^*)\) exceeds certain bounds—either by random sampling checks, batched verification, or partial data comparisons—then with probability at least \(1 - \mathcal{O}(1/p)\), the server (or a lightweight auditing mechanism) will detect a mismatch via \(\mathcal{E}\).  Specifically, letting \(\alpha\) be the proportion of parameters that deviate in \(\mathbf{m}_i^*\), a simple application of the Markov or Chernoff bound yields:
\[
\operatorname{Pr}\bigl[\Delta(\mathbf{m}_i, \mathbf{m}_i^*) \leq \beta\bigr]
\;\le\;
e^{-c\,\alpha},
\]
for some constant \(c>0\) if the distribution of valid vs. invalid parameter entries is random or unpredictably tampered.  Thus, for moderate or large \(\alpha\), the detection probability is overwhelming.

Any adversarial update \(\mathbf{m}_i^*\) that significantly deviates from legitimate bounds will, except with negligible probability in \(\lambda\) (the security parameter), be detected during the aggregation or partial verification process.  Therefore, an adversary cannot easily inject large errors without being detected or suppressed.  Hence, soundness is established.
\end{proof}

\subsection{Differential Privacy}

\begin{thm}[Differential Privacy of Masked Updates]
\label{thm:HE_DP}
Consider the FL scheme extended with noise injection for sensitive parameters, as per Step~3.3 of the workflow. If each client adds independent noise calibrated to the sensitivity of its local model updates, then the resulting global aggregation satisfies \((\epsilon, \delta)\)-differential privacy. 
\end{thm}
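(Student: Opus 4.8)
The plan is to reduce the claim to the standard Gaussian mechanism applied at the level of a single client, and then to propagate that guarantee through aggregation and across rounds using post-processing immunity together with the composition theorem (Theorem~\ref{dp_composition}). First I would fix two adjacent local datasets $\mathcal{D}_j$ and $\mathcal{D}_j'$ differing in a single record at some client $C_j$ and observe that, because every other client's data is unchanged and the injected noises are independent, the only quantity whose distribution differs between the two worlds is client $j$'s noisy update $\widetilde{\nabla F_j}(\mathbf{w})$. This localizes the entire analysis to one client's mechanism.

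The second step is to bound the $\ell_2$-sensitivity of the per-client update, and here the gradient-clipping step is indispensable: enforcing $\|\nabla F_j(\mathbf{w})\| \le C$ guarantees that replacing one record moves the clipped update by at most $\Delta_2 = 2C$ in $\ell_2$ norm. With a finite sensitivity in hand, I would invoke the standard Gaussian mechanism, namely that adding $\mathcal{N}(0,\sigma^2 I)$ noise with $\sigma \ge \Delta_2 \sqrt{2\ln(1.25/\delta)}/\epsilon$ renders the single-client release $(\epsilon,\delta)$-DP. Because the scheme calibrates noise through the sensitivity map $\Gamma$, I would phrase this coordinatewise: each parameter group receives variance scaled to its sensitivity level, and the vector-valued mechanism inherits the $(\epsilon,\delta)$ bound from the worst-case (most sensitive) coordinate.

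Third, I would lift the single-client guarantee to the global aggregate. The key observation is that the homomorphic aggregation $\mathbf{S}_{\text{encrypted}} = \sum_i \alpha_i \, \mathbf{U}_{i,\text{enc}}$ followed by decryption is, by correctness (Theorem~\ref{thm:HE_correctness}), a deterministic function of the already-noised updates; encryption and summation introduce no new dependence on the raw record beyond what passed through the noisy update. By the post-processing immunity of differential privacy, the aggregated, decrypted output is therefore $(\epsilon,\delta)$-DP with respect to a change in client $j$'s data. Finally, since training runs for $T$ rounds, I would apply the composition theorem (Theorem~\ref{dp_composition}) to conclude that the protocol satisfies $(T\epsilon, T\delta)$-DP, or equivalently that budgeting $\epsilon/T$ and $\delta/T$ per round attains the target $(\epsilon,\delta)$ guarantee over the whole run.

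The main obstacle I anticipate is precisely the interface between the selective-encryption design and the DP claim: low-sensitivity parameters in $\mathbf{m}^L$ are transmitted with little or no noise (sometimes in the clear), so the clean post-processing argument only directly protects the noised coordinates. To make the theorem fully rigorous I would either (i) require that \emph{every} coordinate released to the server carry noise calibrated to its own sensitivity, so that the plaintext channel itself satisfies DP, or (ii) establish that the plaintext coordinates are provably non-sensitive, meaning their disclosure yields negligible information about any individual record and hence does not affect the adjacency bound. Pinning down condition (ii) and reconciling it with the per-coordinate calibration promised by the sensitivity map is the delicate part; the Gaussian-mechanism and composition steps are otherwise routine.
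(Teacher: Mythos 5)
Your proposal is correct and follows the same overall route as the paper's proof: calibrate per-client noise to the update sensitivity, argue the single-client release is $(\epsilon,\delta)$-DP, carry the guarantee through encrypted aggregation, and compose across rounds. But your execution differs in three places, and in each your version is the more defensible one. First, where you invoke post-processing immunity --- encryption, homomorphic summation, and decryption are maps applied after the noise, so they cannot degrade the DP guarantee --- the paper instead argues that encryption ``preserves the distribution of $\mathbf{n}_i$ since encryption is a deterministic mapping under a fixed public key'' and then asserts DP-style inequalities directly on ciphertext distributions, including a bound of the form $\Pr\bigl[\|\mathbf{c}_i - \mathbf{c}_i'\| > \tau\bigr] \le \delta$. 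That argument is shaky: RLWE-style encryption is randomized, not deterministic, and distances between ciphertexts do not track plaintext distances; post-processing is exactly the right tool, so keep your step as is. Second, you bound the sensitivity concretely via clipping ($\Delta_2 = 2C$) and state the Gaussian-mechanism noise level explicitly, while the paper only posits noise with $\mathbb{E}\|\mathbf{n}_i\|^2 \propto \Delta^2 \log(1/\delta)/\epsilon^2$ and appeals to ``standard composition theorems'' for the per-client claim; your version makes the per-round guarantee checkable. Third, for round composition you use the basic composition theorem (Theorem~\ref{dp_composition}), giving $(T\epsilon, T\delta)$-DP, whereas the paper invokes advanced composition, $\epsilon' \le \sqrt{2K\log(1/\delta)}\,\epsilon + K\epsilon(e^\epsilon - 1)$; the paper's bound is tighter for many rounds, and you could cite it to strengthen your conclusion, but yours is valid. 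Finally, the obstacle you flag --- that coordinates in $\mathcal{I}_{\text{plain}}$ are released in the clear with little or no noise, so the post-processing argument protects only the noised coordinates --- is a genuine gap in the theorem as stated, and the paper's proof silently ignores it: it injects noise only into sensitive parameters yet claims DP for the entire aggregated release. Your proposed repairs, either noising every released coordinate or proving the plaintext coordinates carry negligible information about any individual record, are precisely the hypotheses that would need to be added for the theorem to hold as stated.
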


\begin{proof} Each client \(i\) injects noise into the sensitive parameters of \(\mathbf{W}_i^{(t+1)}\).  Specifically, let \(\Delta\) be the \(\ell_1\)- or \(\ell_2\)-sensitivity of the local update with respect to one data sample.  The client draws noise \(\mathbf{n}_i\) from a distribution \(\mathcal{M}\) (e.g., Gaussian or Laplacian) such that
\[
\mathbf{W}_i^{(t+1)} \; \leftarrow \; \mathbf{W}_i^{(t+1)} \;+\; \mathbf{n}_i, 
\quad
\mathbb{E}\|\mathbf{n}_i\|^2 \propto \Delta^2 \,\log\bigl(1/\delta\bigr)/\epsilon^2.
\]
By standard composition theorems for differential privacy, adding such noise ensures each client’s parameters remain \((\epsilon,\delta)\)-DP with respect to local data changes.

After adding noise, the client encrypts \(\mathbf{W}_i^{(t+1)}\).  Homomorphic encryption preserves the distribution of \(\mathbf{n}_i\) since encryption is a deterministic mapping under a fixed public key.  Thus the distribution of ciphertexts 
\(\mathbf{c}_i = \texttt{HE.Enc}(\mathrm{PK},\; \mathbf{W}_i^{(t+1)})\)
is “shifted” by \(\mathbf{n}_i\) in the plaintext domain, but this shift is not diminished nor reversed unless the aggregator holds the secret key \(\mathrm{SK}\).

To satisfy \((\epsilon,\delta)\)-DP, we require that for any two neighboring datasets \(\mathcal{D}_i\) and \(\mathcal{D}_i'\) that differ in at most one record, the distributions of the respective (noisy) encrypted updates \(\mathbf{c}_i\) and \(\mathbf{c}_i'\) be close:
\[
\Pr\left[\mathbf{c}_i \in \mathcal{R}\right] 
\;\le\; 
e^\epsilon \,\Pr\left[\mathbf{c}_i' \in \mathcal{R}\right] 
\;+\; 
\delta,
\]
for every measurable set \(\mathcal{R}\).  By construction of Gaussian or Laplacian noise with scale proportional to \(\Delta/\epsilon\), the probability of distinguishing \(\mathbf{c}_i\) from \(\mathbf{c}_i'\) by more than a small threshold remains at most \(\delta\).  Indeed, standard DP results (e.g., \cite{Abadi2016DeepDP,dwork2006calibrating}) show that
\[
\operatorname{Pr}\Bigl[\|\mathbf{c}_i - \mathbf{c}_i'\| \;>\; \tau\Bigr]
\;\le\; 
\delta \quad \text{for relevant }\tau.
\]

Finally, the aggregator homomorphically sums the ciphertexts.  The composition of \((\epsilon,\delta)\)-DP mechanisms, each executed independently on client data, also ensures the final global model is \((\epsilon',\delta')\)-DP, for suitably chosen \(\epsilon'\) and \(\delta'\) (depending on the number of FL rounds).  Using standard composition bounds:
\[
\epsilon' 
\;\le\; 
\sqrt{2K\log\bigl(1/\delta\bigr)} \,\epsilon 
\;+\; 
K\,\epsilon(e^\epsilon - 1),
\]
where \(K\) is the total number of FL rounds.  Therefore, the final global model’s release does not significantly compromise any single client’s data.

Because the (encrypted) noise injection meets the required \((\epsilon,\delta)\)-privacy constraints per round and the aggregator never decrypts partial intermediate updates, the scheme as a whole maintains \((\epsilon,\delta)\)-differential privacy on the global FL model.  This completes the proof.
\end{proof}

\section{Implementation}\label{7}

\subsection{Parameter Settings}  
Our HE scheme for federated learning was implemented in C++ using the Microsoft SEAL library with the following parameter choices:

\begin{itemize}
    \item \textbf{Lattice Dimension:} 8192, chosen for BFV encryption to achieve 128-bit security, balancing security and computational performance.
    \item \textbf{Plaintext Modulus:} \(2^{20}\), chosen to handle integer updates while preserving compatibility with batching and homomorphic arithmetic.
    \item \textbf{Differential Privacy Noise:} Gaussian noise scale \(\sigma = \frac{\text{sensitivity}}{\epsilon}\), where sensitivity is estimated as 1.0 and \(\epsilon = 1.0\). The noise is added to clipped gradients for privacy guarantees.
    \item \textbf{Gradient Clipping:} \(L_2\)-norm clipping bound set to 10.0 to constrain the magnitude of model updates before encryption and DP noise addition.
    \item \textbf{Batching Strategy:} BFV's batching mechanism groups parameters into vectors of size 4096, maximizing parallel processing during homomorphic operations.
    \item \textbf{Sensitivity Threshold:} Parameters with absolute values exceeding a threshold of 5 are encrypted, while less sensitive parameters remain in plaintext for optimized performance.
\end{itemize}

These settings achieve a trade-off between security, accuracy, and computational efficiency, making them well-suited for federated learning with homomorphic encryption.

\subsection{Results}
We tested our scheme on several FL tasks (including image classification and text classification) and compared the runtime and communication overhead with state-of-the-art HE-based FL frameworks. Our primary observations:

\begin{table}[H]
\centering
\caption{Comparison of Aggregation Runtime (in seconds) across Different Model Sizes 
for Homomorphic Encryption-based FL Methods.}
\label{tab:runtime_comparison}
\begin{tabular}{lccc}
\toprule
\textbf{Model Size} & 
\begin{tabular}[c]{@{}c@{}}\textbf{HomEnc-Fed} \cite{zhang2022threats}\end{tabular} & 
\begin{tabular}[c]{@{}c@{}}\textbf{FHE-Fed} \cite{liu2022privacy}\end{tabular} & 
\begin{tabular}[c]{@{}c@{}}\textbf{Proposed Algorithm} \end{tabular} \\
\midrule
1M params   & 80.4  & 74.2  & 28.7  \\
10M params  & 320.1 & 285.9 & 93.4  \\
50M params  & 1562.3 & 1445.7 & 486.9 \\
100M params & 2936.2 & 2677.5 & 892.2 \\
\bottomrule
\end{tabular}
\end{table}

\begin{itemize}
    \item \textbf{Speedup:} We consistently observed a $3\times$ speedup in the aggregation phase, largely due to selective parameter encryption and the efficient batch operations.
    \item \textbf{Memory Usage:} By avoiding unnecessarily high security levels for low-impact parameters, we reduced total ciphertext size by approximately 30\% to 40\%.
    \item \textbf{Privacy Guarantee:} Our embedded DP approach, combined with RLWE-based encryption, provided robust protection. In particular, membership inference and reconstruction attacks had negligible success rates under the tested conditions.
\end{itemize}

\subsection{Analysis} \hfill\\

Strengths: 

\begin{itemize}
    \item \textbf{Improved Efficiency in FL:} The proposed scheme effectively reduces both computation and communication overhead, making HE-based FL more practical.
    \item \textbf{Flexible Parameterization:} The sensitivity map and partial encryption approach allow adaptively tuning encryption levels for different parameters.
    \item \textbf{Robust Privacy:} Thanks to the combination of homomorphic encryption, differential privacy, and dynamic precision levels, the risk of information leakage is minimal.
    \item \textbf{Scalability:} Our method is suitable for large-scale models, offering a feasible route to secure training of BERT-like architectures across many clients.
\end{itemize}

Limitations:
\begin{itemize}
    \item \textbf{Complex Configuration:} The sensitivity map and multi-level encryption require careful tuning and domain knowledge about the model’s architecture and parameter distributions.
    \item \textbf{Residual Overhead:} Although we obtain a $3\times$ speedup, HE in general remains costlier than non-encrypted approaches. Real-time or extremely latency-sensitive tasks might still find this overhead challenging.
    \item \textbf{Parameter Bounds:} We rely on somewhat homomorphic approaches with bounded depth; extremely deep networks or repeated training rounds might require parameter re-initialization or bootstrapping.
\end{itemize}

Future research could build upon our work in a few directions to continue to improve the efficiency of federated learning. 

\begin{itemize}
    \item Refining the sensitivity mapping technique could involve developing automated and adaptive methods that dynamically adjust parameter sensitivity during training, reducing the need for domain-specific tuning and enabling broader applicability across diverse architectures. 
    \item Exploring hybrid cryptographic solutions that combine homomorphic encryption with secure multi-party computation (SMPC) or trusted execution environments (TEEs) could further enhance efficiency and scalability while preserving privacy. 
    \item Investigating advanced bootstrapping techniques or alternative encryption schemes could enable support for deeper networks and extended training rounds, making the approach more robust for complex, long-term training scenarios. 
\end{itemize}  

\section{Conclusion}

In this paper, we presented a novel homomorphic encryption scheme tailored to federated learning. Our approach integrates selective parameter encryption, sensitivity maps, and embedded differential privacy noise to reduce computational and storage overhead while ensuring robust privacy. Experimental evaluations in a C++ environment demonstrate that our scheme offers a $3\times$ improvement over state-of-the-art HE-based FL methods in terms of efficiency.

This research has notable implications for privacy-preserving machine learning, particularly in resource-constrained or real-time scenarios, such as healthcare and edge computing. Our framework paves the way for federated training on large-scale and complex models without compromising user data privacy. Future work may focus on refining the sensitivity mapping technique, combining homomorphic encryption with other techniques such as SMPC or TEEs, or investigating bootstrapping techniques and alternative encryption schemes that could support federated learning.

\end{document}